\documentclass{article}



\usepackage[preprint]{neurips_2021}



\usepackage[utf8]{inputenc} 
\usepackage[T1]{fontenc}    
\usepackage{hyperref}       
\usepackage{url}            
\usepackage{booktabs}       
\usepackage{amsfonts}       
\usepackage{nicefrac}       
\usepackage{microtype}      
\usepackage{xcolor}         

\usepackage{amsmath,amssymb,amsthm}

\newcommand{\indicator}[1]{\mathbf{1}_{\left[#1\right]}}

\newtheorem{theorem}{Theorem}
\newtheorem{lemma}{Lemma}

\newtheorem{remark}{Remark}
\newtheorem{proposition}{Proposition}

\title{Bidding Agent Design in the LinkedIn Ad Marketplace}

%

\author{%
  Yuan Gao,  Kaiyu Yang,  Yuanlong Chen, Min Liu\\
LinkedIn Corporation\\
Mountain View, CA 94043\\
  \texttt{\{yugao,kyang,yuachen, mliu\}@linkedin.com} \\
  \And
  Noureddine El Karoui\thanks{Work done in part while at LinkedIn, Inc. and UC, Berkeley.}\\
  Berkeley, CA 94720\\
  \texttt{nkarouiprof@gmail.com} \\
}

\begin{document}

\maketitle

\begin{abstract}
We establish a general optimization framework for the design of automated bidding agent in dynamic online marketplaces.  It optimizes solely for the buyer's interest and is agnostic to the auction mechanism imposed by the seller. As a result, the framework allows, for instance, the joint optimization of a group of ads across multiple platforms each running its own auction format. Bidding strategy derived from this framework automatically guarantees the optimality of budget allocation across ad units and platforms. Common constraints such as budget delivery schedule, return on investments and guaranteed results, directly translates to additional parameters in the bidding formula. We share practical learnings of the deployed bidding system in the LinkedIn ad marketplace based on this framework. 
\end{abstract}

\section{Introduction}
There is a growing trend of automation in online advertising.  The manual task of creating tailored ad for each platform and splitting budget across these platforms are largely automated by programmatic solutions in demand-side platforms (DSPs). The LinkedIn Marketing Solutions is a DSP that allows advertisers to reach LinkedIn members through the LinkedIn Feed as well as its audience network consists of trusted third party publishers.  Impression opportunities on these various placements are normally sold through different sorts of auctions. For instance, the LinkedIn Feed placement employs generalized second price (GSP) auctions with reserve price, while majority of publishers in the audience network now uses first price auctions. The paper deals with the design of an automated agent that places bid on behalf of advertisers in these different types of auctions. The goal is to help advertisers maximize their results under natural constraints such as budget delivery plan and expected return on investment (ROI). 

Typical advertising objectives include impression, reach, click, social engagement, video view, lead and conversion.  Advertisers can directly encode their valuations of the objectives into the bids for each impression opportunity. For example,  advertisers who value \$5 per click for their ads could place a bid of \$5 * p(click) in a second price auction, where p(click) is the probability that the user would click after seeing the ad. However this bidding strategy is blind to the budget. This could lead to early termination if the budget is constrained. Throttling-based systems are introduced \cite{AGWY2014, X2015} to address this problem by probabilistically blocking the ad from entering into auctions as a way to smooth out the spend over the entire lifetime. Yet a better strategy in this case is to lower the bids so that budget is spent more effectively.

To formally model this problem, in economics literature \cite{BBW2015, BG2019} a profit-maximization setup is often adopted, where profit is defined as the difference between valuation and payment.  In practice, however, valuations from advertisers are not always available. Defining valuations itself is a difficult problem given the scale and complexity of the targeted audience. A preferred way is to frame the bidding problem in a result maximization setup under budget constraint. Valuations from advertisers, if available, can be encoded in this setup via additional constraints on the bids or ROIs. 

Mathematically such a bidding problem with budget constraint can be cast in a stochastic optimal control framework \cite{AKKS2012, GKP2013, BBW2015, FGL2017}. When budget is relatively large compared to the magnitude of bids, fluid approximation \cite{GVR1994, BBW2015, GKP2013, FGL2017} is usually applied which greatly simplifies the solution. Explicit bidding formulas are given in \cite{Z2014} under further assumptions on stationary market competition. In practice though, competition in online ad marketplace is highly dynamic due to the change in demand and supply. For second price auctions, dynamic bid update strategies without stationary assumptions are studied in \cite{BG2019, K2020}.

Instead of directly optimizing for individual advertiser's interest, there is rich literature dealing with the optimization towards objectives of the platform or DSP, by means of online resource allocations \cite{M2007, ABM2019, CBAD2011, AKMT2008, B2020, G2017, F2010}. Bidding strategy for each advertiser in turn arises as a by-product of this optimization. These centralized mechanisms and their equilibrium studies are not considered in our design.

There is limited study on the optimal budget allocation problem across multiple placements. In the case when all placements run second price auctions,  a stochastic bandit based solution is proposed \cite{FB2021} where the value for each impression opportunity is assumed to be unknown. This is an overly pessimistic assumption since in reality the value is usually given by a response prediction model. A closely related problem is studied in \cite{Z2016}, in which the solution relies on feedback controllers towards pre-calculated ROI target for each placement.

In Section \ref{general-formulation} we lay out bidding agent design under a general framework of result maximization under budget constraint.  Online methods for solving this optimization problem are discussed in Section \ref{online-methods-in-the-dual}. Section \ref{other-types-of-constraint} demonstrates that features such as cost control, budget delivery plan and guaranteed delivery easily translates to constraints in this optimization framework, so that they can be incorporated holistically in one bidding formula. 

It's worth noting that the framework is inherently compatible with a collection of ad units and a mixture of different types of selling mechanisms, thereby allowing for the joint optimization of a group of ads across multiple placements each running (possibly) a different auction format. In particular, in Section \ref{marginal_roi} we show that optimal bidding solution derived from this formulation automatically guarantees the optimality of budget allocation across ad units and placements. 

Finally, in Section \ref{practical-considerations} explicit formulas for the starting bid is derived and various issues around the implementation of online optimization methods are discussed.
\section{General Formulation}\label{general-formulation}
In this section we establish a general optimization framework for a single advertiser. For simplicity,  the problem is formulated as maximizing total result under a single budget constraint. As noted later in Section \ref{other-types-of-constraint} it is straightforward to incorporate other types of constraints as well. 

Suppose there are a total of $T$ impression opportunities, triggered from targeted users' search or view sessions.  The $t$-th impression has a value of $v_t$ for the agent, which is usually produced from a user response prediction model tied to the advertiser's marketing objective. As an example, for an advertiser optimizing for conversions, $v_t$ represents the impression to conversion rate.  Denote $G_t(b)$ as the probability of winning the $t$-th impression under a bid price of $b$ and $H_t(b)$ as the expected cost.  With an advertising budget $B$ the optimization problem becomes
\begin{equation}\label{eq:general-formulation}
\max_{b_t} \sum_{t=1}^T v_tG_t(b_t),\quad s.t.\ \sum_{t=1}^T H_t(b_t) \le B.
\end{equation}
In a complete information setting, we can model  $G_t=\indicator{b_t \ge c_t}$ and $H_t=d_t\indicator{b_t \ge c_t}$, where $c_t$ is the minimum bid to win impression $t$, and $d_t$ is the cost if won.  This formulation leads to a Knapsack problem \cite{C2008, ZN2008}. In practice it's often convenient to assume distributional foresight, where $G_t$ and $H_t$ are modeled as smooth functions. Denote their derivatives as $g_t$ and $h_t$. We make the following assumptions:
\begin{itemize}
\item
  $G_t(0) = 0$, $G_t(+\infty) = 1$ and $g_t > 0$ (monotonicity).
\item
  $H_t(0) = 0$ and $h_t \ge 0$.
\end{itemize}
These are mild assumptions that generally hold for all practical auction types including first \& second price auctions, generalized first price (GFP) auctions , generalized second price (GSP) auctions,  Vickrey–Clarke–Groves (VCG) auctions and variants of them with floor price.

The Lagrangian for \eqref{eq:general-formulation} is
\begin{equation}\label{eq:lagrangian}
L(b_t, \lambda) = \sum_{t=1}^T\left[v_tG_t(b_t) - \lambda H_t(b_t)\right] + \lambda B.
\end{equation}
At fixed $\lambda > 0$, we would choose $b_t$ that maximizes the surplus
\begin{equation}\label{eq:surplus_maximization}
(v_t/\lambda)G_t(b_t) - H_t(b_t),
\end{equation}
where $v_t/\lambda$ is the $\lambda$-adjusted value. Optimality condition in \(b_t\) implies
\begin{equation}\label{eq:optimality-b}
v_tg_t(b^*_t) = \lambda h_t(b^*_t),
\end{equation}
or equivalently,
\begin{equation}\label{eq:b-star}
b_t^*(\lambda) = \left(\frac{h_t}{g_t}\right)^{-1}\left(\frac{v_t}{\lambda}\right),
\end{equation}
where $(\cdot)^{-1}$ denotes function inverse.
\begin{remark}[Explicit formula for first and second price auctions]\label{rm:explicit_formula_for_1st_and_2nd}
In ad marketplace impression opportunities are usually sold via auctions, and $H_t$ and $G_t$ are generally related in various kinds of auctions. In second price auctions, for example, $H_t(b_t) = \int_0^{b_t} b g_t(b)db = b_tG_t(b_t) - \int_0^{b_t}G_t(b)db$. This implies $h_t(b_t) = b_tg_t(b_t)$. According to \eqref{eq:optimality-b} The optimal bidding strategy in second price auctions is simply bidding the $\lambda$-adjusted value
\[
b_t^*(\lambda) = v_t/\lambda.
\]
In first price auctions $H_t(b_t) = b_tG_t(b_t)$. The surplus maximization \eqref{eq:surplus_maximization} simplifies to $\max_{b_t} (v_t/\lambda - b_t)G_t(b_t)$.  Since $h_t(b_t) = G_t(b_t) + b_tg_t(b_t)$, \eqref{eq:b-star} gives the following optimal bidding strategy
\[
b_t^*(\lambda) = \left(I + \frac{G_t}{g_t}\right)^{-1}\left(\frac{v_t}{\lambda}\right),
\]
where $I$ stands for identity mapping. Note that $G_t/g_t \ge 0$, so we always have $b_t^*(\lambda) \le v_t/\lambda$. This means that in first price auctions the bid has to be shaded on top of the $\lambda$-adjusted value.
\end{remark}
We define two useful quantities that will be used throughout.
\[
V_t(\lambda) := v_tG_t(b_t^*(\lambda)),
\]
the expected value obtained in the $t$-th opportunity when bidding optimally under $\lambda$, and
\[
S_t(\lambda) := H_t(b_t^*(\lambda)),
\]
the expected spend in the $t$-th opportunity when bidding optimally under $\lambda$.  Let $V(\lambda) := \sum_{t=1}^T V_t(\lambda)$ be the total expected value and $S(\lambda) := \sum_{t=1}^T S_t(\lambda)$ the total expected spend.

The following proposition shows that their derivatives are linearly related, and, under mild conditions, both of them are monotonic.
\begin{proposition}\label{prop:value_spend_relation}
$V_t'(\lambda)  \equiv \lambda S_t'(\lambda), \forall \lambda > 0$.  If in addition $\forall h_t > 0$, we have $(\log h_t)' > (\log g_t)'$, then $V_t'(\lambda)  \equiv \lambda S_t'(\lambda) \le 0,  \forall \lambda > 0$.
\end{proposition}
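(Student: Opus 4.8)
The plan is to reduce both claims to the chain rule together with the first-order condition \eqref{eq:optimality-b}. Differentiating $V_t(\lambda)=v_tG_t(b_t^*(\lambda))$ and $S_t(\lambda)=H_t(b_t^*(\lambda))$ gives $V_t'(\lambda)=v_tg_t(b_t^*(\lambda))\,(b_t^*)'(\lambda)$ and $S_t'(\lambda)=h_t(b_t^*(\lambda))\,(b_t^*)'(\lambda)$; since \eqref{eq:optimality-b} says $v_tg_t(b_t^*)=\lambda h_t(b_t^*)$, multiplying the second identity by $\lambda$ yields $\lambda S_t'(\lambda)=v_tg_t(b_t^*)(b_t^*)'(\lambda)=V_t'(\lambda)$. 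This settles the linear relation, pending differentiability of $b_t^*$, which is discussed below.

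For the sign, note first that if $v_t=0$ the surplus \eqref{eq:surplus_maximization} reduces to $-H_t(b_t)$, maximized at $b_t^*=0$, so $V_t\equiv S_t\equiv 0$ and the claim holds trivially; assume henceforth $v_t>0$. I would show $b_t^*$ is nonincreasing, i.e. $(b_t^*)'(\lambda)\le 0$; then $V_t'(\lambda)=v_tg_t(b_t^*)(b_t^*)'(\lambda)\le 0$ because $v_t>0$ and $g_t>0$, and $S_t'(\lambda)=V_t'(\lambda)/\lambda\le 0$. When $v_t>0$, \eqref{eq:optimality-b} and $g_t>0$ force $h_t(b_t^*(\lambda))>0$. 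Implicitly differentiating \eqref{eq:optimality-b} in $\lambda$ gives $\big(v_tg_t'(b_t^*)-\lambda h_t'(b_t^*)\big)(b_t^*)'(\lambda)=h_t(b_t^*)$; substituting $\lambda=v_tg_t(b_t^*)/h_t(b_t^*)$ and factoring, the bracket equals $v_tg_t(b_t^*)\big[(\log g_t)'(b_t^*)-(\log h_t)'(b_t^*)\big]$, which is strictly negative exactly because $(\log h_t)'>(\log g_t)'$ wherever $h_t>0$. Since $h_t(b_t^*)>0$ we get $(b_t^*)'(\lambda)<0$, hence $V_t'(\lambda)=\lambda S_t'(\lambda)<0$. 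Equivalently, the same inequality says $h_t/g_t$ is strictly increasing on $\{h_t>0\}$, so $b_t^*(\lambda)=(h_t/g_t)^{-1}(v_t/\lambda)$ from \eqref{eq:b-star} is decreasing in $\lambda$, which bypasses the implicit differentiation.

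The step I expect to require the most care is justifying that $\lambda\mapsto b_t^*(\lambda)$ is well defined and differentiable, so that the chain rule above is legitimate. This is precisely where the standing structural assumptions ($G_t,H_t$ smooth, $g_t>0$, $h_t\ge 0$) together with the hypothesis $(\log h_t)'>(\log g_t)'$ enter: the hypothesis makes $h_t/g_t$ strictly monotone on $\{h_t>0\}$, so its inverse exists and is smooth there by the inverse function theorem, while on $\{h_t=0\}$ the degenerate argument above applies. Once invertibility and smoothness of $h_t/g_t$ are granted, the remainder is just the two short computations above.
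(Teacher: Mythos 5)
Your proof is correct and follows essentially the same route as the paper's: the chain rule combined with the first-order condition \eqref{eq:optimality-b} for the identity $V_t'=\lambda S_t'$, and implicit differentiation of \eqref{eq:optimality-b} (factoring the denominator into $v_tg_t\left[(\log g_t)'-(\log h_t)'\right]$) for the sign. The extra care you take with the degenerate case $v_t=0$ and with the differentiability of $b_t^*$ only tightens the paper's argument, which instead disposes of the case $h_t(b_t^*)=0$ by noting the numerator of the implicit-differentiation formula vanishes there.
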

\begin{proof}
The relation $V_t'(\lambda)  \equiv \lambda S_t'(\lambda), \forall \lambda > 0$ is directly given by optimality condition \eqref{eq:optimality-b}.  Now apply implicit differentiation on \eqref{eq:optimality-b}, we get
\[
\frac{db_t^*}{d\lambda} = \frac{h_t(b_t^*)}{v_tg_t'(b_t^*) - \lambda h_t'(b_t^*)}.
\]
When $h_t(b_t^*) = 0$ it's clear that $\frac{db_t^*}{d\lambda} = 0$. Otherwise, the condition $(\log h_t)' > (\log g_t)'$ implies
\[
\frac{h_t'(b_t^*)}{h_t(b_t^*)} > \frac{g_t'(b_t^*)}{g_t(b_t^*)} = \frac{g_t'(b_t^*)}{(\lambda / v)h_t(b_t^*)} \Rightarrow \frac{db_t^*}{d\lambda} < 0.
\]
Finally by chain rule, we have $V_t'(\lambda) = v_tg_t(b_t^*)\frac{db_t^*}{d\lambda} \le 0$.
\end{proof}
\begin{remark}
In second price auctions, the extra condition $\forall h_t > 0, (\log h_t)' > (\log g_t)'$ in Proposition \ref{prop:value_spend_relation} always holds due to the fact that $h_t(b_t) = b_tg_t(b_t)$. In first price auctions, it's easy to show that the condition translates to log-concavity in $G_t$, i.e. $G_tG_t'' \le g_t^2$. Note that $G_t$ can be seen as the cumulative density function (CDF) of competitors' bid distribution, therefore the condition holds as long as the distribution is log-concave.
\end{remark}
Plugging $b_t^*(\lambda)$ in the Lagrangian \eqref{eq:lagrangian}, we arrive at the dual problem.
\begin{equation}\label{eq:dual_objective}
\min_{\lambda \ge 0}L(\lambda) := \sum_{t=1}^T\left[V_t(\lambda) - \lambda S_t(\lambda) + \lambda \frac{B}{T}\right].
\end{equation}
Setting $L'(\lambda^*) = 0$, and applying Proposition \ref{prop:value_spend_relation} we get
\begin{equation}\label{eq:dual_optimality}
\sum_{t=1}^T S_t(\lambda^*) = B.
\end{equation}
Equation \eqref{eq:dual_optimality} says that an optimal solution $\lambda^* > 0$ would result in a match of the expected spend $S(\lambda)$ with budget. In other words, the solution $\lambda^*$ is either 0 when budget is unconstrained, or one that spends the budget, not surprising given Karush--Kuhn--Tucker (KKT) conditions \cite{KKT2014}.

In addition, Proposition \ref{prop:value_spend_relation} implies the monotonicity of spend $S(\lambda)$. Due to this monotonicity, bisection method can be used to find $\lambda^*$ given $G_t$, $H_t$ and $v_t$ for $t = 1, \ldots, T$.  For example,  $\lambda^*$ hence the optimal bidding sequence $b_t$ can be obtained in hindsight given historical auction logs. However in practice it's difficult to generate accurate forecast for $G_t$ and $H_t$ in dynamic marketplaces. In the next section we consider online optimization methods in the dual that does not require access to explicit models of these quantities.

\section{Online Methods in the
Dual}\label{online-methods-in-the-dual}
In this section we present various online optimization methods in the dual \eqref{eq:dual_objective}. Denote the dual loss at impression opportunity $t$ as
\[
L_t(\lambda) := V_t(\lambda) - \lambda S_t(\lambda) + \lambda \frac{B}{T}.
\]
The dual objective \eqref{eq:dual_objective} is then $\sum_t L_t(\lambda_t)$. Derivative of the online loss 
\[
L_t'(\lambda) = B/T - S_t(\lambda),
\]
 is the difference between the average budget in each opportunity and the expected spend at $t$.
\subsection{Follow the Leader}\label{follow-the-leader}

In follow the leader (FTL) algorithm at each iteration we choose the
best solution in hindsight, i.e.,
\[
\lambda_{t+1} = \text{argmin}_{\lambda \ge 0}\sum_{\tau=1}^t L_\tau(\lambda).
\]
Similar to optimality condition \eqref{eq:dual_optimality} we have
\[
\frac{1}{t}\sum_{\tau=1}^t S_\tau(\lambda_{t+1}) = \frac{B}{T}.
\]
In practice, this can be achieved by replaying the past auctions to find $\lambda_{t+1}$ such that average cost per auction is $B/T$. As the problem size becomes larger over time, other variants such as those using a fixed size look back window from $t-\Delta t$ to $t$ might be preferred. 

In the stationary case (i.e. $G_t = G, H_t = H$), the FTL approach converges to the optimal solution in one iteration (i.e.
$\lambda_2 = ... = \lambda_T = \lambda^*$), but in reality $\lambda_t$ produced by FTL can be unstable when $G_t$ and $H_t$ are nonstationary.
\subsection{Linearization and Follow the Regularized
Leader}\label{linearization-and-follow-the-regularized-leader}
It's obvious that in both first and second price auctions $L_t(\lambda)$ is convex since $L_t''(\lambda) \ge 0$ according to Proposition \ref{prop:value_spend_relation}. Therefore the linearized loss around $\lambda_t$
\[
\tilde{L_t}(\lambda) := L_t(\lambda_t) + L'_t(\lambda_t)(\lambda - \lambda_t)
\]
is an lower bound of $L_t(\lambda)$. In follow the regularized leader
(FTRL) we choose
\[
\lambda_{t+1} = \text{argmin}_{\lambda \ge 0}\sum_{\tau=1}^t \tilde{L}_\tau(\lambda) + \Omega(\lambda),
\]
where $\Omega(\lambda)$ is a regularization term.
\begin{theorem}[Dual Online Mirror Descent]\label{thm:dogd}
When $\Omega(\lambda) = \lambda^2/(2\epsilon)$, the online update is additive $\lambda_{t+1} =  \lambda_t - \epsilon L'_t(\lambda_t)$; when $\Omega(\lambda) = \lambda\ln\lambda/\epsilon$, the online update is multiplicative $\lambda_{t+1} = \lambda_te^{-\epsilon\L'_t(\lambda_t)}$.
\end{theorem}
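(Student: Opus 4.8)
The plan is the standard reduction of linearized FTRL to a mirror-descent recursion: unfold the FTRL step, throw away all terms that do not depend on the optimization variable, solve the resulting one-dimensional regularized linear minimization in closed form for each of the two choices of $\Omega$, and then observe that the closed form telescopes into the claimed additive / multiplicative update.

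First I would expand the linearized cumulative loss,
\[
\sum_{\tau=1}^t \tilde L_\tau(\lambda) = \sum_{\tau=1}^t\bigl[L_\tau(\lambda_\tau) - L'_\tau(\lambda_\tau)\lambda_\tau\bigr] + \Bigl(\sum_{\tau=1}^t L'_\tau(\lambda_\tau)\Bigr)\lambda .
\]
The bracketed sum is a constant in $\lambda$, so, writing $G_t := \sum_{\tau=1}^t L'_\tau(\lambda_\tau)$, the FTRL step collapses to $\lambda_{t+1} = \argmin_{\lambda \ge 0}\bigl(G_t\lambda + \Omega(\lambda)\bigr)$. Both regularizers are strictly convex on their domain (for $\Omega(\lambda)=\lambda^2/(2\epsilon)$ one has $\Omega''=1/\epsilon>0$; for $\Omega(\lambda)=\lambda\ln\lambda/\epsilon$ one has $\Omega''=1/(\epsilon\lambda)>0$ on $\lambda>0$, using $\epsilon>0$), so the minimizer is unique and pinned down by the stationarity condition $G_t+\Omega'(\lambda_{t+1})=0$ together with the sign constraint.

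For the quadratic regularizer, stationarity gives $\lambda_{t+1}=-\epsilon G_t$, and imposing $\lambda\ge0$ yields $\lambda_{t+1}=\max(0,-\epsilon G_t)$; since $\lambda_t=-\epsilon G_{t-1}$ whenever the iterate is interior and $G_t=G_{t-1}+L'_t(\lambda_t)$, this telescopes to $\lambda_{t+1}=\lambda_t-\epsilon L'_t(\lambda_t)$, with base case $\lambda_1=\argmin_{\lambda\ge0}\Omega(\lambda)=0$ coming from the empty sum. For the entropic regularizer, $\Omega'(\lambda)=(\ln\lambda+1)/\epsilon$, so stationarity reads $\ln\lambda_{t+1}=-1-\epsilon G_t$, i.e.\ $\lambda_{t+1}=e^{-1-\epsilon G_t}$, which is automatically positive so the constraint never binds; together with $G_t=G_{t-1}+L'_t(\lambda_t)$ and $\lambda_t=e^{-1-\epsilon G_{t-1}}$ this gives $\lambda_{t+1}=\lambda_t e^{-\epsilon L'_t(\lambda_t)}$, with base case $\lambda_1=e^{-1}$. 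A short induction on $t$ makes both telescoping arguments rigorous.

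The one delicate point — and the only place I expect to spend real care — is the interaction of the nonnegativity constraint with the telescoping in the quadratic case: once the unconstrained minimizer $-\epsilon G_t$ dips below $0$, the ``lazy'' FTRL iterate and the ``agile'' recursion $\lambda_t-\epsilon L'_t(\lambda_t)$ can disagree, so the stated additive update should be read with a projection $[\,\cdot\,]_+$ onto $\lambda\ge0$, with exact identification on the steps where iterates remain interior (in particular on the unconstrained problem, which is the relevant regime here since $L'_t(\lambda)=B/T-S_t(\lambda)$ drives $\lambda$ toward a strictly positive optimum). For the entropic regularizer the log barrier keeps every iterate strictly positive, so no such caveat is needed. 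Everything else is the routine FTRL-to-mirror-descent computation.
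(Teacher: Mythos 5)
Your proposal follows essentially the same route as the paper's own proof: drop the constant terms of the linearized losses, solve the one-dimensional regularized problem via the stationarity condition $\Omega'(\lambda_{t+1}) = -\sum_{s=1}^t L'_s(\lambda_s)$, and telescope. Your extra attention to the $\lambda \ge 0$ constraint in the quadratic case (reading the additive update with a projection when the unconstrained minimizer goes negative) is a point the paper's proof silently skips, but it does not change the argument in the regime of interest.
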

Second order approximations to $L(\lambda)$ can be employed to obtain faster convergence. In particular, $L''(\lambda) = -\sum_{t=1}^T S'_t(\lambda) = -S'(\lambda)$ represents the sensitivity of expected spend around $\lambda$.  Intuitively when the variation of expected spend is small one can make larger updates to $\lambda$. Online optimization methods that use second order information, such as natural gradient descent \cite{A2000} or online Newton's method \cite{HAK2007}, can be used in the update.
\section{Other Types of Constraints}\label{other-types-of-constraint}
Apart from the overall budget constraint, in reality there are sometimes preferences on budget delivery schedule as well.  Advertisers may also prefer some sort of cost control, e.g., keeping the cost per result under a certain threshold. Other types of constraints include the request to ensure certain number of results are delivered within a time range. In this section we show it's straightforward to incorporate them in the general formulation.
\subsection{Cost Control}
A cost per result target $C$ could be specified representing a given ROI goal. The agent attempts to control its cost per result under $C$ by solving the following problem
\[
\max_{b_t} \sum_{t=1}^T v_tG_t(b_t),\quad s.t.\ \sum_{t=1}^T H_t(b_t) \le \min\left(B, C\sum_{t=1}^T v_tG_t(b_t)\right).
\]
Equipping a multiplier for each constraint, the Lagrangian becomes
\[
\sum_{t=1}^T\left[(1+\mu C)v_tG_t(b_t)-(\lambda+\mu) H_t(b_t)\right] + \lambda B.
\]
Solution in the optimal bidding formula $b_t$ is similar, where
\[
b_t^*(\lambda, \mu) = \left(\frac{h_t}{g_t}\right)^{-1}\left(\frac{1 + \mu C}{\lambda + \mu}\cdot v_t\right).
\]
Again KKT conditions imply either $\lambda^* = 0$ (in which case the budget is not a binding constraint) or the budget is spent, i.e.,
\[
\sum_{t=1}^T H_t(b_t^*(\lambda^*, \mu^*)) = B.
\]
Similarly, either $\mu^* = 0$ (in which case the cost per result target is not a binding constraint), or the cost per result is equal to $C$, i.e.,
\[
\sum_{t=1}^T H_t(b_t^*(\lambda^*, \mu^*)) = C\sum_{t=1}^T v_tG_t(b_t^*(\lambda^*, \mu^*)).
\]
\subsection{Budget Delivery Control}
We could define budget delivery constraints on some subintervals $T_k, k = 1, \ldots, K$, where $\sum_{t\in T_k} \le T$. This can reflect advertisers' delivery preferences, for example, to limit the spend on weekends.
\[
\max_{b_t} \sum_{t=1}^T v_tG_t(b_t), \ s.t.\sum_{t=1}^T H_t(b_t) \le B, \sum_{t\in T_k} H_t(b_t) \le B_k, \forall k.
\]
In practice these constraints can also be dynamic, where they can be added and adjusted in realtime based on inputs from advertisers. The Lagrangian is then
\[
\sum_k\sum_{t\in T_k}\left[v_tG_t(b_t)-(\lambda+\lambda_k) H_t(b_t)\right] + \sum_k\lambda_kB_k + \lambda B,
\]
where $\lambda_k \ge 0$. The new constraints influence the bidding formula. For $t \in T_k$,
\[
b_t^*(\lambda, \lambda_k) = \left(\frac{h_t}{g_t}\right)^{-1}\left(\frac{v_t}{\lambda + \lambda_k}\right).
\]
Note that each $\lambda_k$ is only active during the period of $T_k$, i.e., when the constraint is in place.
\subsection{Guaranteed Delivery}
Advertisers may also request that certain number of results to be delivered, for instance, during holiday season. Instead of manually increasing the budget during the time period, more precise control can be achieved by encoding the requirement as constraints directly in the optimization framework. More formally, suppose we'd like $V_k$ number of results to be delivered in each subinterval $T_k$ (defined in the previous section), then the problem becomes
\[
\max_{b_t} \sum_{t=1}^T v_tG_t(b_t), \ s.t.\sum_{t=1}^T H_t(b_t) \le B, \sum_{t\in T_k} v_tG_t(b_t) \ge V_k, \forall k.
\]
The Lagrangian is now
\[
\sum_k\sum_{t\in T_k}\left[(1 + \mu_k)v_tG_t(b_t)-\lambda H_t(b_t)\right] - \sum_k\mu_kV_k + \lambda B.
\]
For $t \in T_k$, the bidding formula is
\[
b_t^*(\lambda, \mu_k) = \left(\frac{h_t}{g_t}\right)^{-1}\left(\frac{1 + \mu_k}{\lambda}\cdot v_t\right).
\]
Since $\mu_k \ge 0$, a nonzero multiplier $\mu_k^*$ would give a boost to the bid during the period $T_k$ to help achieve the delivery requirement.

A combination of the constraints can be implemented at the same time. As an illustration, if all constraints discussed in this section are involved, the bidding formula would then become
\[
b_t^*(\lambda, \mu, \lambda_k, \mu_k) = \left(\frac{h_t}{g_t}\right)^{-1}\left(\frac{1 + \mu C + \mu_k}{\lambda + \lambda_k + \mu}\cdot v_t\right).
\]
This offers a systematic way to handle a complex set of constraints simultaneously.
\section{Multiple Placements, Group of Ads and Equality of Marginal ROI}\label{marginal_roi}
In advertising, each impression opportunity is associated with a placement. For instance, an advertiser might want to serve his/her ad on multiple publishers at the same time, where each publisher website is a placement. Our framework naturally allows the simultaneous optimization on multiple placements as equation \eqref{eq:general-formulation} does not differentiate the $t$-th impression opportunity based on its placement. Concretely, suppose there are a total of $K$ placements and $T_k, k = 1, \ldots, K$ represents the set of impression opportunities associated with the $k$-th placement, with $\sum_{t \in T_k} = T$. Then \eqref{eq:general-formulation} is essentially
\[
\max_{b_t} \sum_{k=1}^K\sum_{t \in T_k} v_tG_t(b_t),\quad s.t.\ \sum_{k=1}^K\sum_{t \in T_k} H_t(b_t) \le B.
\]
Note that the impression opportunities across different placements can be interleaved in time.  According to the Lagrangian \eqref{eq:lagrangian} the total number of results obtained under optimal bidding is
\[
L^* := L(b^*_t, \lambda^*) = \sum_{k=1}^K\left[\lambda^* B_k + \sum_{t \in T_k} V_t(\lambda^*) - \lambda^* S_t(\lambda^*) \right],
\]
where $B_k$ is the budget spent on the $k$-th placement. In this case $\partial L^*/\partial B_k$, the marginal return on investment (ROI) for all placements $1, \ldots, K$ are equal (to $\lambda^*$), which is a necessary condition for optimal budget allocation across multiple placements.

Similar arguments apply to the scenario of optimizing a group of ads with a global budget. Under the optimal bidding strategy in our framework, budgets are allocated to each ad in an optimal manner to generate the most results. In fact, one major advantage of this framework is its composability. Automated budget allocation across placements, ad units as well as features described in Section \ref{other-types-of-constraint} are all elegantly handled via a unified bidding strategy.
\section{Practical Considerations}\label{practical-considerations}
In this section we aim to bridge the gaps between theory and practical implementation.  Important topics such as starting bid, forecasting error and various issues around the online updates are discussed. In particular, Section \ref{initialization-of-lambda} gives explicit formulas for the optimal starting bid based on statistics from the targeted audience. Section \ref{implementation-of-online-methods} describes several variants of the batch online gradient descent formula and applicability of them in different scenarios. 
\subsection{Initialization of $\lambda$}\label{initialization-of-lambda}
Section \ref{online-methods-in-the-dual} provides incremental update rules for the multiplier $\lambda$. The cold start problem, namely the starting value of the multiplier, is also very important as it impacts the time it takes to converge. On the macroscopic level it also influences the price stability of the entire marketplace. Given a set of targeted users, we can learn from these users' past auction logs to initialize $\lambda$.  In the following theorem we provide explicit formulas for second price auctions.
\begin{theorem}\label{thm:cold_start_explicit_solution}
Suppose the competitors' bids follow a log-normal distribution with parameters $\mu$ and $\sigma$. Additionally, suppose the value of the ad (independently) follows a log-normal distribution with parameters $\mu'$ and $\sigma'$. Given a budget of $B$ and a total opportunity forecast of $T$, the $\lambda^*$ for second price auctions is the solution to
\[
e^{\mu + \frac{\sigma^2}{2}}\Phi\left(\frac{\mu' - \mu - \ln \lambda^* - \sigma^2}{\sqrt{(\sigma')^2 + \sigma^2}}\right) = \frac{B}{T},
\]
where $\Phi$ is the CDF of a standard normal distribution $\mathcal N(0, 1)$.
\end{theorem}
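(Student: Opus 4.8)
The plan is to reduce the statement to the dual optimality condition \eqref{eq:dual_optimality} and then evaluate the resulting expectation in closed form using the log-normality assumptions.

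First I would recall from Remark \ref{rm:explicit_formula_for_1st_and_2nd} that in a second price auction the optimal bid is $b_t^*(\lambda) = v_t/\lambda$ and the expected cost is $H_t(b) = \int_0^b c\, g_t(c)\, dc$, where $g_t$ is the density of the minimum bid $c_t$ required to win opportunity $t$ (so $G_t$ is the CDF of $c_t$). Hence the optimal expected spend at $t$ is
\[
S_t(\lambda) = \int_0^{v_t/\lambda} c\, g_t(c)\, dc = \Exp{c_t\,\indicator{c_t \le v_t/\lambda}},
\]
the expectation over the competitor's bid $c_t$. Reading the theorem as a statement about forecast spend over the targeted audience, each pair $(v_t, c_t)$ is an independent draw with $v_t$ log-normal$(\mu',\sigma')$, $c_t$ log-normal$(\mu,\sigma)$, and $v_t \myIndep c_t$, so $S(\lambda) = \sum_{t=1}^T S_t(\lambda) = T\,\psi(\lambda)$ with
\[
\psi(\lambda) := \Exp{c\,\indicator{\lambda c \le v}}.
\]
By \eqref{eq:dual_optimality}, $\lambda^*$ is exactly the solution of $\psi(\lambda^*) = B/T$, so it remains to show $\psi(\lambda) = e^{\mu + \sigma^2/2}\,\Phi\!\left(\tfrac{\mu' - \mu - \ln\lambda - \sigma^2}{\sqrt{(\sigma')^2 + \sigma^2}}\right)$.

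To evaluate $\psi$, write $c = e^X$ and $v = e^Y$ with $X \sim \mathcal N(\mu,\sigma^2)$ and $Y \sim \mathcal N(\mu',(\sigma')^2)$ independent, so $\psi(\lambda) = \Exp{e^X\,\indicator{Y - X \ge \ln\lambda}}$. The key step is an exponential change of measure for the Gaussian pair $(X,\, Y-X)$: multiplying a Gaussian density by $e^X$ and renormalizing produces the factor $e^{\mu + \sigma^2/2}$ and shifts each coordinate's mean by its covariance with $X$. Concretely, conditioning on $X = x$ gives $\psi(\lambda) = \int_{\mathbb R} e^x\,\Phi\!\left(\tfrac{\mu' - \ln\lambda - x}{\sigma'}\right)\varphi(x)\, dx$ with $\varphi$ the $\mathcal N(\mu,\sigma^2)$ density; completing the square in the exponent of $e^x\varphi(x)$ turns this into $e^{\mu + \sigma^2/2}$ times the same integral against the $\mathcal N(\mu + \sigma^2, \sigma^2)$ density. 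Recognizing the remaining integral as $\mathbf P\!\left(\sigma' W + X' \le \mu' - \ln\lambda\right)$ for $W \sim \mathcal N(0,1)$ independent of $X' \sim \mathcal N(\mu + \sigma^2, \sigma^2)$, and using $1 - \Phi(x) = \Phi(-x)$, yields exactly the claimed expression. Equating with $B/T$ finishes the proof; I would also note that $\psi$ is strictly decreasing in $\lambda$ (as $\Phi$ is increasing and its argument decreases in $\ln\lambda$), consistent with the monotonicity of $S$ in Proposition \ref{prop:value_spend_relation}, so the root is unique whenever $B/T < e^{\mu + \sigma^2/2}$.

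The main obstacle is the Gaussian computation in the previous paragraph — in particular getting the induced mean shift of $Y - X$ right: it is $\mathrm{Cov}(X, Y - X) = -\mathrm{Var}(X) = -\sigma^2$, which is precisely what produces the $-\sigma^2$ inside $\Phi$. The conditioning-and-completing-the-square route sketched above is elementary but slightly tedious; alternatively one can invoke the standard exponential-tilting identity for multivariate normals directly. A secondary point worth stating explicitly is the i.i.d. modeling interpretation that lets $\sum_t S_t(\lambda)$ collapse to $T\psi(\lambda)$, since without it the theorem's $T$ opportunities would not combine into a single expectation.
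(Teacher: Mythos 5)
Your proof is correct and follows essentially the same route as the paper's: both reduce the claim to the dual optimality condition $S(\lambda^*) = B/T$ with the second-price optimal bid $v/\lambda$, and then evaluate $E\left[c\,\indicator{c \le v/\lambda}\right]$ for independent log-normals in closed form, arriving at the same $e^{\mu+\sigma^2/2}$ factor (from completing the square against $e^x$) and the same $\sqrt{(\sigma')^2+\sigma^2}$ denominator (from convolving the two Gaussians). The only difference is the order of integration: the paper integrates out the competitor's bid first via the log-normal partial-expectation formula and then applies its Lemma $E[\Phi(aX+b)] = \Phi\left(b/\sqrt{1+a^2}\right)$, whereas you integrate out the value first and recover the same two ingredients via exponential tilting --- a Fubini-equivalent rearrangement of the identical computation (your added remarks on the i.i.d.\ interpretation of the $T$ opportunities and on uniqueness of the root are sound and slightly more careful than the paper).
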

In practice $\mu$ and $\sigma$ are derived using auction logs from targeted users. The $\mu'$ and $\sigma'$ come from a combination of targeted users' value distribution and ad-specific features. In the multiple placement scenario, the estimation of these quantities can be done at each placement level. The solution $\lambda^*$ will then be derived based on the traffic forecast $T_k$ from each placement $k$ and a global budget using Theorem \ref{thm:cold_start_explicit_solution}.
\subsection{Implementation of Online Methods}\label{implementation-of-online-methods}
Section \ref{online-methods-in-the-dual} provides various online update methods that converge to $\lambda^*$. In practice we employ a batch version of the dual online mirror descent algorithm, where the multiplier $\lambda_t$ is updated every time interval $dt$. In the following we focus on the additive formula in Theorem \ref{thm:dogd} though the arguments apply to the multiplicative update as well. Let $N_{dt}$ be the number of impression opportunities in time $dt$,  then the (batch) online update becomes
\begin{equation}\label{eq:batch_update}
\lambda_{t+N_{dt}} = \lambda_t - \epsilon\left(\frac{B}{T}\cdot N_{dt} - S_{dt}\right),
\end{equation}
where $S_{dt}$ is the sum of expected spend during the past update period under $\lambda_t$.  This can simply be set as the observed spend in $dt$. However if the charge events are sparse, some estimator based on the observed spend is needed to reduce the variance. \eqref{eq:batch_update} is equivalent to
\[
\lambda_{t+N_{dt}} = \lambda_t - \epsilon_{dt}\left(1 - R_{dt}\right),
\]
where
\[
\epsilon_{dt} := \epsilon\frac{B N_{dt}}{T}
\]
is the step size, and
\begin{equation}\label{eq:R_dt}
R_{dt} := \frac{S_{dt}/N_{dt}}{B/T}
\end{equation}
is the ratio of average cost per opportunity and average budget per opportunity. 

The choice of $dt$ is important as it controls the tradeoff between frequency of update and variance in $S_{dt}$.  An alternative is to keep track of the observed impression opportunities and to trigger an update after a fixed amount of observations. 
\subsubsection{Normalization}
The scale of $\lambda$ various a lot across advertisers, due to the diversity in budget and targeting setups. We therefore choose a normalizing constant $\lambda'$ for each advertiser and define $\tilde \lambda_t = \lambda_t/\lambda'$ to make the update dimensionless:
\[
\tilde \lambda_{t+N_{dt}} = \tilde \lambda_t - \eta_{dt}\left(1 - R_{dt}\right),
\]
where 
\begin{equation}\label{eq:eta_dt}
\eta_{dt} := \epsilon\frac{B N_{dt}}{\lambda' T}
\end{equation}
becomes a dimensionless step size that is easier to tune in practice.  However, the choice of $\lambda'$ for each advertiser plays an important role in convergence if a global parameter $\xi := \epsilon B/\lambda'$ is selected such that $\eta_{dt} = \xi N_{dt} / T$. To see that, note that the inverse of $\lambda_t$ will be used to compute the bids in auctions, and
\[
\frac{1}{\lambda_{t+N_{dt}}} = \frac{1}{\lambda'(\tilde \lambda_t - \eta_{dt}\left(1 - R_{dt}\right))} = \frac{1/\lambda_t}{1 - (\xi N_{dt} / T)\left(1 - R_{dt}\right)(\lambda' / \lambda_t)}.
\] 
Therefore a $\lambda' < \lambda^*$ would slow down the convergence whereas a $\lambda' > \lambda^*$ tends to create oscillatory behavior asymptotically due to magnification of the noise in $1 - R_{dt}$. In practice we find setting the normalization factor as the initialization provided in Section \ref{initialization-of-lambda} works well. 
\subsubsection{Total Forecast vs. Relative Forecast}
Sometimes it is easier to forecast the relative traffic pattern over time rather than the absolute number of opportunities $T$.  In that case one can use $\hat{r}_{dt}$, the forecasted proportion of traffic in time $dt$, to replace $N_{dt}$ and $T$ in the formula of $R_{dt}$ in \eqref{eq:R_dt} and $\eta_{dt}$ in \eqref{eq:eta_dt}. The formula then becomes $R_{dt} = S_{dt}/(B\hat{r}_{dt})$ and step size $\eta_{dt} = \epsilon B\hat{r}_{dt}/\lambda'$. The estimator $\hat{r}_{dt}$ can also be dynamically adjusted based on past observed traffic $N_{dt}$.
\subsubsection{Model Predictive Control}
One can reset the average budget per opportunity target $B/T$ at each update. A model predictive control (MPC) version of the update would set $R_{dt}$ as
\[
\frac{S_{dt}/N_{dt}}{(B-B_{t+N_{dt}})/(T-t-N_{dt})}.
\]
This version of the update would encourage budget exhaustion near the end of the ad's lifetime.
\section{Experimental Results}
Bidding agent based on this framework was implemented on the LinkedIn ad marketplace and is compared with a prior feedback-control agent which adjusts the bid to track the forecast traffic curve. For unbiased evaluation the two agents are compared using the budget-split experimentation platform \cite{L2020}. We observe statistically significant increase of 8.25\% in advertiser ROI (with neutral platform revenue) using the new method.
\section{Acknowledgement}
We would like to thank Yi Zhang and Onkar Dalal for their support. We appreciate Wen Pu, Qian Yao and Ricardo Salmon for helpful discussions.

\bibliographystyle{ACM-Reference-Format}
\bibliography{paper}

\appendix
\section{Proof of Theorem \ref{thm:dogd}}
This is standard result in online optimization \cite{H2019}. Ignoring the constant terms, we have
\[
\lambda_{t+1} = \text{argmin}_{\lambda \ge 0}\sum_{s=1}^t L'_s(\lambda_s)\cdot \lambda + R(\lambda).
\]
First order optimality condition gives $R'(\lambda_{t+1}) = -\sum_{s=1}^t L'_s(\lambda_s)$. In case $R(\lambda) = \lambda^2/(2\epsilon) \Rightarrow R'(\lambda) = \lambda/\epsilon$, then
\[
\lambda_{t+1} = -\epsilon\left(\sum_{s=1}^{t-1} L'_s(\lambda_s) + L'_t(\lambda_t)\right) = -\epsilon\left(\frac{\lambda_t}{-\epsilon} + L_t'(\lambda_t)\right) = \lambda_t -\epsilon L'_t(\lambda_t).
\]
Similarly,  $R(\lambda) = \lambda\ln\lambda / \epsilon \Rightarrow R'(\lambda) = (\ln\lambda + 1)/\epsilon$, then
\[
\lambda_{t+1} = e^{-1 - \epsilon\sum_{s=1}^{t-1}L'_s(\lambda_s) -\epsilon L'_t(\lambda_t)} = \lambda_te^{-\epsilon L'_t(\lambda_t)}.
\]
\section{Proof of Theorem \ref{thm:cold_start_explicit_solution}}
The following lemma is useful in the proof. Let $\phi$ and $\Phi$ be the PDF and CDF of a standard normal distribution $\mathcal N(0, 1)$.
\begin{lemma}\label{cor:Expectation_CDF}
Suppose $X \sim \mathcal N(0, 1)$, then $E[\Phi(aX+b)] = \Phi(\frac{b}{\sqrt{1+a^2}})$.
\end{lemma}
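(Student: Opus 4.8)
The plan is to use a standard probabilistic device: realize $\Phi(aX+b)$ as a conditional probability for an independent standard normal, and then collapse the resulting double randomization into a single Gaussian tail probability.

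First I would introduce an auxiliary random variable $Y \sim \mathcal N(0,1)$, independent of $X$. Since $\Phi$ is the CDF of $Y$, conditioning on $X$ gives $\Phi(aX+b) = P(Y \le aX+b \mid X)$ almost surely. Taking expectations over $X$ and applying the tower property, $E[\Phi(aX+b)] = P(Y \le aX+b) = P(Y - aX \le b)$.

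The last step is to identify the law of $Z := Y - aX$. As a linear combination of independent centered Gaussians, $Z$ is Gaussian with mean $0$ and variance $\mathrm{Var}(Y) + a^2\,\mathrm{Var}(X) = 1 + a^2$. Therefore $P(Z \le b) = P\bigl(Z/\sqrt{1+a^2} \le b/\sqrt{1+a^2}\bigr) = \Phi\bigl(b/\sqrt{1+a^2}\bigr)$, which is exactly the claimed identity. Nothing in the argument requires $a > 0$, so the formula holds for all real $a$ and $b$; this is essentially the only point worth a second glance.

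If a self-contained analytic argument is preferred, an alternative is to set $F(b) := \int_{-\infty}^{\infty} \Phi(ax+b)\,\phi(x)\,dx$, differentiate under the integral sign to get $F'(b) = \int_{-\infty}^{\infty} \phi(ax+b)\,\phi(x)\,dx$, complete the square in $x$ in the exponent to recognize the right-hand side as a convolution of two Gaussians equal to $\frac{1}{\sqrt{1+a^2}}\,\phi\bigl(b/\sqrt{1+a^2}\bigr)$, and then integrate back in $b$ using $F(-\infty) = 0$. The only real obstacle there is bookkeeping the normalizing constants in the completion of the square; since the probabilistic proof sidesteps this entirely, that is the version I would present.
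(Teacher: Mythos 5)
Your proof is correct and uses the same key idea as the paper: realize $E[\Phi(aX+b)]$ as $P(Y \le aX+b)$ for an independent standard normal $Y$, then conclude from $Y - aX \sim \mathcal N(0, 1+a^2)$. The paper merely writes this out as an explicit double integral rather than via the tower property; the argument is identical.
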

\begin{proof}
\[
\begin{aligned}
E[\Phi(aX+b)] &= \int_{-\infty}^{\infty}\int_{-\infty}^{ax+b}\phi(y)dy\phi(x)dx \\
&= \int_{-\infty}^{\infty}\int_{-\infty}^{\infty}\indicator{y \le ax+b}\cdot \phi(y)\phi(x)dydx.
\end{aligned}
\]
This is simply the probability of $Y \le aX+b$ given that $X, Y$ are i.i.d standard normally distributed variables. Since $Y - aX \sim \mathcal{N}(0, 1+a^2)$, this completes the proof.
\end{proof}
Now we proceed with the proof of the theorem.
\begin{proof}
Denote $p_{\mu, \sigma}(x)$ as the probability density function for $Lognormal(\mu, \sigma)$.  The expected cost per opportunity under a given $\lambda$ is
\[
S(\lambda) := \int_{0}^{\infty}\left(\int_{0}^{v/\lambda} zp_{\mu, \sigma}(z)dz\right)p_{\mu', \sigma'}(v)dv,
\]
where at each given value $v$ we would bid $v/\lambda$ (see Remark \ref{rm:explicit_formula_for_1st_and_2nd}).  First note that the inner integral $\int_{0}^{v/\lambda} zp_{\mu, \sigma}(z)dz$ is the partial expectation of a log-normal random variable.  By standard results
\[
\int_{0}^{v/\lambda} zp_{\mu, \sigma}(z)dz = e^{\mu + \frac{\sigma^2}{2}}\Phi\left(\frac{\ln v - \ln \lambda - \mu - \sigma^2}{\sigma}\right).
\]
That implies $S(\lambda) = e^{\mu + \frac{\sigma^2}{2}}E\left[\Phi\left(\frac{\ln V - \ln \lambda - \mu - \sigma^2}{\sigma}\right)\right]$, where $V$ is r.v. with $Lognormal(\mu', \sigma')$. Now with a change of variable $X := \frac{\ln V - \mu'}{\sigma'} \sim \mathcal{N}(0, 1)$, we have
\[
S(\lambda) = e^{\mu + \frac{\sigma^2}{2}}E\left[\Phi\left(\frac{\sigma' X + \mu' - \mu - \ln \lambda - \sigma^2}{\sigma}\right)\right].
\]
Now apply the results from Lemma \ref{cor:Expectation_CDF} ,
\[
S(\lambda) = e^{\mu + \frac{\sigma^2}{2}}\Phi\left(\frac{\mu' - \mu - \ln \lambda - \sigma^2}{\sqrt{(\sigma')^2 + \sigma^2}}\right).
\]
\end{proof}
\end{document}